\documentclass[12pt]{iopart}

\usepackage{cite}
\usepackage{graphicx}
\usepackage{graphics}
%Uncomment next line if AMS fonts required
\usepackage{iopams} 
\usepackage{color} 

\newtheorem{theorem}{Theorem} 
\newtheorem{lemma}{Lemma}
\newenvironment{proof}[1][Proof]{\begin{trivlist}
\item[\hskip \labelsep {\bfseries #1}]}{\end{trivlist}}

\newcommand{\qed}{\nobreak \ifvmode \relax \else
      \ifdim\lastskip<1.5em \hskip-\lastskip
      \hskip1.5em plus0em minus0.5em \fi \nobreak
      \vrule height0.75em width0.5em depth0.25em\fi}

\newcommand{\sech}{\rm sech\,}
\newcommand{\csch}{\rm csch\,}
\newcommand{\arctanh}{\rm arctanh\,}

\def\be{\begin{equation}}
\def\eea{\end{eqnarray}}
\def\ee{\end{equation}}
\def\bea{\begin{eqnarray}}
\def\ea{\end{array}}
\def\ba{\begin{array}}

\newcommand{\exval}[1]{\mbox{$\left\langle \, {#1}\, \right\rangle$}}

\newcommand{\bel}[1]{\begin{equation}\label{#1}}

\newcommand{\deriv}[1]{\mbox{$\displaystyle\frac{d}{d{#1}}$}}
\newcommand{\pderiv}[1]{\mbox{$\displaystyle\frac{\partial}{\partial{#1}}$}}

\newcommand{\NNS}[1]{\mbox{$^{ \left\langle \, {#1}\, \right\rangle}$}}

\def\zzz{{\mathchoice {\hbox{$\sf\textstyle Z\kern-0.4em Z$}}
{\hbox{$\sf\scriptstyle Z\kern-0.3em Z$}}
{\hbox{$\sf\scriptscriptstyle Z\kern-0.2em Z$}}
{\hbox{$\sf\textstyle Z\kern-0.4em Z$}}}}

\begin{document}

\title{The hypergeometric series for the partition function of the 2-D
  Ising model}

\author{G. M. Viswanathan}

\address{Department of Physics
and {National Institute of Science and Technology of Complex Systems,}
Universidade Federal do Rio Grande do Norte, 59078-970
Natal--RN, Brazil}

\begin{abstract}
In 1944 Onsager published the formula for the partition function of
the Ising model for the infinite square lattice.  He was able to
express the internal energy in terms of a special function, but he
left the free energy as a definite integral.
Seven decades later, the partition function and free energy have yet to be
written in closed form, even with the aid of special functions.
Here we 
evaluate the definite integral
explicitly, using hypergeometric series.
Let $\beta$ denote the reciprocal temperature, $J$ the coupling and
$f$ the free energy per spin.
We prove that
\mbox{$ - \beta f = \ln(2 \cosh 2K) - \kappa^2
 ~\! {_4F_3} \big[~\!
    ^{1,~1,~3/2,~3/2} _{~~~2,~2,~2}
;16 \kappa^2
    ~\!\big] 
~\!,
$}
where $_p F_q$ is the generalized hypergeometric function,
$K=\beta J$, and $2\kappa= \tanh 2K \sech 2K$.

\end{abstract}

%Uncomment for PACS numbers title message
\pacs{05.50.+q 02.30.Gp 64.60.De 75.10.Hk}

\bigskip
\bigskip
\bigskip
\bigskip

\bigskip
\bigskip
\bigskip
\bigskip

\bigskip
\bigskip
\bigskip
\bigskip

\bigskip
\bigskip
\bigskip
\bigskip

\noindent Journal version: \hfill
http://dx.doi.org/10.1088/1742-5468/2015/07/P07004

\submitto{Journal of Statistical Mechanics (JSTAT)  }

% Comment out if separate title page not required
\maketitle

The Ising model was originally proposed by Lenz to describe magnetism
\cite{ising}.  His student Ising solved the eponymous model in one
dimension (\mbox{1-D}) in his doctoral thesis.  Since then,
it has become one of the most important models in the history of
statistical physics
\cite{huang,barrybook,baxter,salinas,feynman,stanley}.  The 2-D case
for zero external magnetic field was not fully solved until 1942, and
the results were published only in 1944 \cite{onsager}.  The 3-D case
remains unsolved.  Onsager's derivation of the partition function of
the 2-D Ising model is often described as a mathematical tour de
force. He was able to express the internal energy in terms of a
special function, but he left the partition function and the free
energy in terms of a definite integral \cite{onsager}.  Ever since, a
common assumption has been that there is no succinct way to write the
partition function in closed form, even using special functions.
The required 
integral is indeed not easy to evaluate
explicitly, but  we show that it is possible to do so.
In fact, Onsager had already expanded the free energy in a power
series~\cite{onsager}, however he did not 
recognize it as a known special
function (at least not until late in his life, see below).
Here we continue the calculation where he left off and show that his
power series is hypergeometric.  

{

Readers unfamiliar with special functions may question what is gained by
trading a definite integral for a special function.  One is just as
complicated as the other, it could be argued.  Consider the following
helpful 
analogy with trigonometric functions. For $|x| \leq 1$ the integral
\[ 
\int_0^x {dy \over \sqrt{1-y^2} } \]
can be written as $\arcsin x$.  Neither expression contains more
information than the other.  Yet most readers will agree that $\arcsin
x$ is preferrable to the integral, mainly because trigonometric
functions are well understood.  The same is true of special functions.
In fact, the Hungarian mathematician Paul Tur\'an thought that
``special functions'' should instead be renamed {\it useful
  functions\/,} according to Askey (see
refs.~\cite{cipra1998,andrewsbook}).  They are ubiquitous and arise in
a variety of physical problems.  For example, hypergeometric functions
appear very naturally in mathematical physics
\cite{arXiv:1207.2815,glasser-lamb-jphysa,joyce-jphysa,glasserpaper},
statistical mechanics
\cite{doi:10.1088/1751-8113/44/38/385002,doi:10.1088/1751-8113/45/7/075205},
and number theory \cite{ramanujan-j}.
% M. D. Rogers
% Ramanujan J
% {\bf 18} 327--340 (2009).
}

We briefly review  the square lattice Ising model with
symmetric coupling.  Consider a two dimensional square lattice where
at each point $i$ of the lattice is located a spin-{\small 1/2}
particle.  Each spin $\sigma_i$ can assume only 2 values:
$\sigma_i=\pm 1$. 
The Ising model Hamiltonian as a function of a spin configuration
\mbox{$\sigma=(\sigma_1,\sigma_2,\ldots )$} is given by

\be H(\sigma)= -J \sum_{\NNS{i,j}} \sigma_i \sigma_j ~.  
\ee 
Here $\exval{i,j}$ represents the set of lattice
points $i,j$ which are nearest neighbors.
The sum
should avoid double counting, so that
pairs $\exval{i,j}$ and $\exval{i,j}$ are not counted
separately.  The constant $J$ is known as the coupling and is an
interaction energy.

We are interested in the thermodynamic limit, but let us initially
consider the partition function for a finite $L \times L$ system with
$N=L^2$ spins.
Let  $T$ be the thermodynamic temperature, $ k_{\mbox{\tiny B}} $
Boltzmann's constant and let  $\beta=1/(k_{\mbox{\tiny B}}T)$. The
partition function $Z_N(\beta)$ is then defined as the sum over all
possible spin configurations of $\exp\left(-\beta H(\sigma)\right)$:
\be
Z_N(\beta) = \sum_{\sigma} e^{-\beta H} 
~.
\ee
So $Z_N$ can also be thought of as the two-sided Laplace transform of
the degeneracy $\Omega(E)$ of the energy level $E$.  The free energy
is given by
$
F= - k_{\mbox{\tiny B}} T \ln Z_N
$
and  the free energy
per spin in the thermodynamic limit is given by
$
f=  - k_{\mbox{\tiny B}} T \ln  \lambda ~,  
$
where, 
\be \lambda= \lim_{N\to \infty } Z_N^{1/N} ~.   \ee

Onsager referred to $\lambda$ as the ``partition function per atom''
and henceforth we will refer to $\lambda$ simply as the partition
function.  In the seminal work of 1944, he derived the exact partition
function,
\begin{eqnarray}
  \ln \lambda &= \ln{2 \cosh(2K)}
\nonumber
\\
& \quad
+ {1\over 2\pi^2}
\int_0^\pi
\int_0^\pi
\ln (1-4\kappa \cos \omega_1 \cos \omega_2)
~d\omega_1 d\omega_2
 ~, \label{eq-f-ons} 
\end{eqnarray}
where 
\bel{eq-k-def}
2\kappa = \tanh(2K) \sech(2K)
 ~.
\ee  
One of the two integrals can be
evaluated, yielding

\be
\ln \! \left({\lambda \over 2 \cosh 2K}\right)=
{1\over 2 \pi}
\int_0^\pi
\!
\ln \! \left({1+\sqrt{(1- (4\kappa\sin \varphi)^2} \over 2} \right)
\! d\varphi
\label{eq-worlfframaeoif} 
\ee
and versions of expressions (\ref{eq-f-ons}) and
(\ref{eq-worlfframaeoif}) are those found in the textbooks.  For
technical details and historical
context, see
refs. \cite{feynman,barrybook,huang,baxter,onsager,sherman,burg,bru,montroll,cipra,costa,peierls,kw,waerden,kac,hurst-spin,temperley,kasteleyn,vdovi,romance,baxter-enting,stanley,salinas,ising,fisher-ons,lieb}.

Onsager wrote that the integral appearing in the partition function
and free energy 
``cannot be expressed in closed form.''
It is true that the required closed form cannot be found among the
commonly tabulated integrals, even in terms of special functions.
Nevertheless, we will show below that it is certainly possible to
express the integral in terms of a special function. Whether or not it
is of ``closed form'' is a matter of context and
convention. Traditionally, the term ``closed form'' does not include
special functions. However, Onsager himself considered some special
functions to be in closed form, for example the elliptic integrals
(see \cite{onsager} and the discussion below).  Indeed, in the context
of the Ising model and statistical mechanics, many special functions
form part of the repertoire of ``closed form'' expressions.  From this
point of view, our result represents an advance.  Before we state our
claim, we review a few more relevant facts.

Although the free energy has never
before been expressed in terms of special functions, yet it is 
possible~\cite{onsager} to express the internal
energy per spin, defined by
\bel{eq-ons-u-def}
u=- \pderiv{\beta} \ln \lambda 
~,
\ee
in
terms of an elliptic integral:
\be
\label{eq-ons-u}
u= -J (\coth 2K) \left(1 + {2\over \pi}\left(2(\tanh 2K)^2-1\right) {\mathsf K}
(4\kappa) \right) ~. 
\ee
Here we have chosen the sans serif letter $\mathsf K$ to distinguish
the complete elliptic integral of the first kind ${\mathsf K}(k)$ from
the reduced reciprocal temperature $K=\beta J$.  We use the same
notation for the elliptic integral adopted in
\cite{onsager,analysisbook,andrewsbook}, where the argument $k$ of $\mathsf K(k)$
is the {\it elliptic modulus} and not the {\it parameter} (the latter
defined as $m=k^2$).  There are several conventions in use, so this is
an important point to note.  The definition is thus
\bel{eq-K-int}
{\mathsf  K}(k) = 
\int_0^{\pi/2} {d\theta \over \sqrt{1- (k\sin \theta)^2~}} 
~. 
\ee

Elliptic integrals can be expressed in terms of $_pF_q$ generalized
hypergeometric functions \cite{andrewsbook}.  
For example, the complete elliptic integral of the first kind 
above is equivalently given  by 
\bel{eq-ellipticK-2f1}
{\mathsf  K}(k) = \left({\pi \over 2}\right) 
{_2F_1}\left[\ba{c} \frac{1}{2} \frac{1}{2} \\ 1 \ea;k^2\right]
~. 
\ee
We briefly explain this notation. 
A $_pF_q$ function has a power
series $\sum c_n x^n $ such that the ratio $c_{n+1}/c_n$ of
successive coefficients is a rational function of $n$, i.e. a
ratio of polynomials in the degree $n$ of the summed monomials.
 The numbers $p$ and  $q+1$ give the
degrees of the polynomials of the numerator and denominator,
respectively.
 Let the
Pochhammer symbol $(x)_n$ denote the
rising factorial,
\begin{eqnarray}
(x)_0&=1 \nonumber \\
(x)_n&= (x)_{n-1} (x+n-1)~; \quad \quad n=1,2,3\ldots 
\end{eqnarray}
Equivalently, in terms of the gamma function $\Gamma(x)$, the
Pochhammer symbol is given by $(x)_n=\Gamma(x+n)/\Gamma(x)$.
The $_pF_q$ function is then concisely defined as follows:
    \bel{eq-pfq-def}
_pF_q\left[ \ba{c}{a_1,a_2,\dots, a_p} \\ {b_1,b_2,\dots,b_q}\ea ;x \right]
= \sum_{n=0}^\infty
{(a_1)_n(a_2)_n \dots (a_p)_n  \over (b_1)_n(b_2)_n \dots (b_q)_n  }
 ~{x^n \over
         {n!}}
~.
\ee 
The condition $p=q+1$ separates the two distinct regimes $p<q+1$ for
which the $_pF_q$ function is entire, and $p>q+1$ when the radius of
convergence is zero. When $p=q+1$ exactly, the radius of convergence
is 1 (with convergence on the unit circle a somewhat delicate
issue). In our case, we will find $p=4$ and $q=3$ so that $p=q+1$.
Similarly, for the elliptic integral in (\ref{eq-ons-u}) we can see
from (\ref{eq-ellipticK-2f1}) that $p=2$ and $q=1$.  So these
functions have unit radius of convergence.  The singularity of
$\mathsf K(4\kappa)$ at $4\kappa=1$ in Eq.~(\ref{eq-ons-u}) is precisely what is
responsible for critical point of the phase transition in the 2-D
Ising model.  The radius of convergence corresponds to the celebrated
critical temperature of the phase transition, first found by Kramers
and Wannier \cite{kw} in 1941.

Having reviewed the necessary definitions, we state our main result:

\begin{theorem}
Onsager's partition function $\lambda$ in (\ref{eq-f-ons}) can be
written in terms of a hypergeometric function as $\lambda_*$, where

\begin{eqnarray}
\ln \lambda_*
&= \ln(2 \cosh 2K) 
- \kappa^2 ~{_4}F_3
\left[ \ba{c}{1,1,{3\over 2},{3\over 2}} \\ {2,2,2}\ea 
; 16 \kappa^2 \right]  \label{eq-claim2}~,
\end{eqnarray}
and where  $\kappa$ is defined by (\ref{eq-k-def}).

\label{th-main}
\end{theorem}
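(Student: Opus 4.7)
The plan is to start directly from Onsager's double-integral expression (\ref{eq-f-ons}), expand the logarithm, integrate term-by-term, and then recognize the resulting power series in $\kappa^2$ as the claimed $_4F_3$. Concretely, since $|4\kappa\cos\omega_1\cos\omega_2|\le 4\kappa\le 1$ (with equality only at criticality, and strictly below one away from $T_c$), I can apply the Mercator series
\[
\ln(1-4\kappa\cos\omega_1\cos\omega_2) = -\sum_{n=1}^{\infty}\frac{(4\kappa)^n}{n}\cos^n\!\omega_1\cos^n\!\omega_2,
\]
which converges uniformly on $[0,\pi]^2$ for $4\kappa<1$ (the $4\kappa=1$ case follows by continuity/monotone convergence).

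Next I would integrate termwise. Since $\int_0^\pi\cos^n\omega\,d\omega$ vanishes for odd $n$ and equals $\pi\binom{2m}{m}/4^m$ for $n=2m$, only even powers survive, giving
\[
\frac{1}{2\pi^2}\int_0^\pi\!\!\int_0^\pi\ln(1-4\kappa\cos\omega_1\cos\omega_2)\,d\omega_1d\omega_2
 = -\sum_{m=1}^{\infty}\frac{\kappa^{2m}}{4m}\binom{2m}{m}^{\!2}.
\]
This recovers Onsager's original power series for the free energy and places the problem squarely in the realm of combinatorial identities.

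The main (still routine) step is to rewrite this sum in standard hypergeometric form. Using the Legendre duplication identity in the form $\binom{2m}{m}=4^m(1/2)_m/m!$, together with $(1/2)_m = \tfrac{1}{2}(3/2)_{m-1}$, $(1)_{m-1}=(m-1)!$, and $(2)_{m-1}=m!$, a short manipulation gives
\[
\frac{1}{4m}\binom{2m}{m}^{\!2} = \frac{16^{m-1}(1)_{m-1}^{2}(3/2)_{m-1}^{2}}{(2)_{m-1}^{3}\,(m-1)!}.
\]
Shifting the index to $n=m-1$ and factoring out $\kappa^{2}$ then yields exactly
\[
\sum_{m=1}^{\infty}\frac{\kappa^{2m}}{4m}\binom{2m}{m}^{\!2}
= \kappa^{2}\sum_{n=0}^{\infty}\frac{(1)_n(1)_n(3/2)_n(3/2)_n}{(2)_n(2)_n(2)_n}\frac{(16\kappa^2)^n}{n!},
\]
which is the $_4F_3$ on the right-hand side of (\ref{eq-claim2}), completing the proof.

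The only real obstacle is the algebraic verification that the binomial coefficient $\tfrac{1}{4m}\binom{2m}{m}^2$ matches the ratio of Pochhammer symbols in the $_4F_3$; this is mechanical once one has the duplication identity. Convergence of the final series for all $K\in\mathbb{R}$ is automatic from $16\kappa^2\le 1$ (the maximum of $4\tanh^2 2K\,\sech^2 2K$ is attained exactly at the Kramers--Wannier critical temperature), consistent with the unit disc of convergence of the $_4F_3$ with $p=q+1$. The boundary case $16\kappa^2=1$ is the only delicate point, but the singular behavior of the hypergeometric function there reproduces the known logarithmic singularity of the free energy at $T_c$.
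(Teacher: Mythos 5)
Your proof is correct, but it follows a genuinely different route from the one the paper presents. You expand the logarithm in Onsager's double integral, integrate term by term to recover Onsager's own power series $\ln\lambda=\ln(2\cosh 2K)-\sum_{m\ge1}\binom{2m}{m}^2\kappa^{2m}/(4m)$, and then match coefficients against the $_4F_3$ series using the duplication identity; your Pochhammer bookkeeping ($(3/2)_{m-1}=2(1/2)_m$, $(2)_{m-1}=m!$, etc.) checks out. This is essentially the ``formal power series'' method the paper says was used to \emph{discover} the result but deliberately does not write out as the proof. The published proof instead differentiates $\ln\lambda_*$ with respect to $\beta$: a hypergeometric identity (Lemma 1) collapses $\frac{d}{dx}\bigl(x\,{_4F_3}\bigr)$ to an elliptic integral $\mathsf K$ plus an elementary term, the result is matched against Onsager's closed form for the internal energy $u$, and the integration constant is fixed at $\beta=0$. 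Your approach is more direct and self-contained (no Lemma 1, no chain-rule gymnastics through the substitution $v=\tanh 2K$), at the cost of having to justify termwise integration and the behaviour at the convergence boundary; the paper's approach avoids all convergence questions by leaning on the already-known elliptic-integral form of $u$, and as a bonus exhibits the structural reduction from $_4F_3$ to $_2F_1$ under differentiation. One small correction to your closing remark: the free energy itself is \emph{finite} at $T_c$ (the $_4F_3$ series converges at $16\kappa^2=1$, since its terms decay like $n^{-2}$; the paper notes the value involves Catalan's constant) --- the logarithmic divergence lives in the specific heat, i.e.\ in second derivatives, not in $\ln\lambda$ itself.
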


%\textcolor{blue}
{
 The $_4F_3$ function above cannot be reduced to a sum of functions
 of simpler type. For example, comparing series one can show that the
 following $_3F_2$ function is reducible in terms of simpler
 functions:
 \[\
 _3F_2\left[1,\frac 3 2,\frac 3 2; 2, 2;x\right] = -\frac 4 x + \frac 8 {\pi x} {\mathsf K}(x^{1/2}) ~.  
 \] 
 In contrast, the $_4F_3$ in Theorem 1 cannot be reduced in this
 manner, i.e. it cannot be expressed as a sum of functions of simpler
 type (see also ref. \cite{newreferee-ref}).  }

%\textcolor{blue}
{ The $_4F_3$ hypergeometric function in Theorem \ref{th-main} is not
  entirely unexpected in fact, because the integral in
  (\ref{eq-f-ons}) is of a type known as a Mahler measure
  \cite{arXiv:1207.2815,glasserpaper}, that can lead to $_4F_3$
  functions. 
The Mahler measure associated with the integral in Onsager’s formula
can be obtained from Eq. (39) in ref.~\cite{referee2paper-rogers},
Eq. (17) in ref.~\cite {arXiv:1207.2815} or Eq. (12) in
ref.~\cite{glasserpaper}.
%The Mahler measure associated with the integral in Onsager's formula 
%can be obtained from Eq. (13) in ref.\cite{glasserpaper}) or
%from Eq. (3.1) in ref.\cite{new-rogers-paper}
}

  We have very recently learnt \cite{pc} that, in the 1970s, Glasser
  and Onsager working together arrived at a similar (but different)
  expression to the one above in Theorem~\ref{th-main}.  They used a
  $_4F_3$ function as well as the complete elliptic integral of the
  second kind, $\mathsf E(k)$.  However, they did not publish their
  result.
% with their unpublished
%result.  
Considering the potentially broad interest in this fascinating
piece of 
historical information, below we restate their previously unpublished
result.
%We have verified (but do
%not show here) the equivalence of 
%Theorem~\ref{th-main} above with
%the following:

\begin{theorem}[Glasser and Onsager]
The partition function $\lambda$ in (\ref{eq-f-ons}) can be
rewritten as $\lambda_\star$, where
\label{th-glasser}
\begin{eqnarray}
&\ln \lambda_\star = \ln(2 \cosh 2K) 
\nonumber
\\
& 
%\quad\quad\quad
\quad\quad
\quad\quad
- {1\over 2} 
+ {1\over\pi} {\mathsf E(4 \kappa)} 
+ \kappa^2 ~{_4}F_3
\left[ \ba{c}{{1\over 2},1,1,{3\over 2}} \\ {2,2,2}\ea 
; 16 \kappa^2 \right]  
%\label{eq-claim2}
~.
\end{eqnarray}

\label{th-glasser-onsager}
\end{theorem}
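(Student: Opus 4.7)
The strategy is to deduce Theorem~\ref{th-glasser-onsager} directly from Theorem~\ref{th-main}: since both theorems express $\ln\lambda$ as $\ln(2\cosh 2K)$ plus a $\kappa$-dependent correction, it suffices to show that the two corrections agree. Writing $x = 16\kappa^2$ and letting $F_1(x)$ and $F_2(x)$ denote the $_4F_3$ series appearing in Theorems~\ref{th-main} and~\ref{th-glasser-onsager} respectively, the claim $\lambda_\star = \lambda_*$ reduces to the single identity
$$\frac{x}{16}\bigl(F_1(x) + F_2(x)\bigr) \;=\; \frac{1}{2} - \frac{1}{\pi}\mathsf{E}\bigl(\sqrt{x}\bigr)~.$$

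First I would merge the two $_4F_3$ series into a single hypergeometric of lower order. The Pochhammer ratio $(3/2)_n/(1/2)_n = 2n+1$ gives $(3/2)_n + (1/2)_n = 2(n+1)(1/2)_n$, and combining with the elementary identity $(n+1)(1)_n^2 = (1)_n(2)_n$ collapses the sum to
$$F_1(x) + F_2(x) \;=\; 2\;{}_3F_2\!\left[\ba{c}{1/2,\,1,\,3/2} \\ {2,\,2}\ea;x\right]~.$$
This is reminiscent of the reducible $_3F_2$ recorded immediately after Theorem~\ref{th-main}.

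Next I would establish the companion reduction
$${}_3F_2\!\left[\ba{c}{1/2,\,1,\,3/2} \\ {2,\,2}\ea;x\right] \;=\; \frac{4}{x}\left(1 - \frac{2}{\pi}\mathsf{E}\bigl(\sqrt{x}\bigr)\right)~,$$
using the standard hypergeometric form $\mathsf{E}(k) = (\pi/2)\,{}_2F_1(-1/2,1/2;1;k^2)$. A coefficient-by-coefficient comparison, applying $(1/2)_n = (2n)!/(4^n n!)$, $(3/2)_n = (2n+1)!/(4^n n!)$, and the contiguous ratio $(-1/2)_n = -(1/2)_n/(2n-1)$, shows that both sides produce $(2n+1)((2n)!)^2/[16^n(n+1)^2(n!)^4]$ as the coefficient of $x^n$. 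Chaining this reduction with the previous collapse of $F_1+F_2$ yields the boxed identity and completes the proof.

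The main obstacle is purely bookkeeping in the second step: the factors $16^n$, $(n+1)$, $(2n+1)$, and $(2n-1)$ must be shepherded carefully. A conceptually different and presumably more historically faithful path would be to re-examine the Onsager double integral~(\ref{eq-f-ons}) and integrate by parts in one angular variable before expanding in powers of $\kappa$, thereby extracting a factor $\sqrt{1-(4\kappa\sin\varphi)^2}$ whose angular integral is $\mathsf{E}(4\kappa)$, with the remaining series then recognizable as the $_4F_3$ of Theorem~\ref{th-glasser-onsager}; this would plausibly reconstruct Glasser and Onsager's original unpublished derivation. Once Theorem~\ref{th-main} is in hand, however, the algebraic route above is the more economical.
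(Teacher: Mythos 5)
Your proposal is correct and follows essentially the same route as the paper: both deduce Theorem~\ref{th-glasser-onsager} from Theorem~\ref{th-main} by establishing the identity $\kappa^2\,{}_4F_3\left[1,1,\frac32,\frac32;2,2,2;16\kappa^2\right]=\frac12-\frac1\pi\mathsf E(4\kappa)-\kappa^2\,{}_4F_3\left[\frac12,1,1,\frac32;2,2,2;16\kappa^2\right]$ via term-by-term comparison of power-series coefficients, then substituting. The only difference is organizational: the paper matches the coefficients of the two sides directly in one computation, whereas you factor the same algebra through the intermediate reducible series ${}_3F_2\left[\frac12,1,\frac32;2,2;x\right]=\frac4x\left(1-\frac2\pi\mathsf E(\sqrt x)\right)$ (which, like your collapse of $F_1+F_2$ to twice this ${}_3F_2$, checks out), a slightly tidier bookkeeping of the identical argument.
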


Below we give rigorous proofs of Theorems~\ref{th-main} and
\ref{th-glasser-onsager}, but first we briefly discuss the intuition
and method behind the discovery.  The
logarithm inside the integral in Eq. (\ref{eq-f-ons}) can be
expanded in a Taylor series.  One can perform the definite
integral term by term and then sum them up. Onsager himself
performed this calculation and arrived at the following
expression for the partition function:

\be 
\ln \lambda = \ln (2 \cosh 2K)
 - 
\sum_{n=1}^\infty
 {2n \choose n}^2 {\kappa ^{2n}  \over 4 n} 
~.
\label{eq-ons-series}
\ee
He did not proceed further, except to note the finite radius of
convergence.  

We instead approached the sum in (\ref{eq-ons-series}) as a {\it
  formal power series}.  Unlike normal power series, formal power
series are defined algebraically, independently of convergence.
Rather than interpreting this series analytically as converging to a
function, we instead attempted to match the coefficients in the series
with those in the formal power series definitions of special
functions.  If all coefficients match, then the two series are equal
in an algebraic sense and we will have found the desired special
function.  Theorems \ref{th-main} and \ref{th-glasser-onsager} in fact
follow from (\ref{eq-ons-series}).

It is interesting to note that although  Eqs (\ref{eq-f-ons}) and
(\ref{eq-ons-series}) above appear together as Eq. (109c) in Onsager's
1944 paper, yet (\ref{eq-f-ons}) is widely known whereas
(\ref{eq-ons-series}) has received very little attention in
comparison.
When we recently came across (\ref{eq-ons-series}) for the first time,
we immediately suspected that the series was a $_pF_q$  generalized
hypergeometric function.  Even with only passing familiarity with
special functions, readers will recognize the following clues pointing
to a generalized hypergeometric function: (i) a power series, (ii)
factorials in the numerator and denominator of the series coefficients
and (iii) the arguments of the factorials grow with the degree of the
monomials.

Onsager himself seems to have been
at least partially aware of the connection with hypergeometric
functions, for he wrote in the appendix,

\begin{quote}
We shall deal here with the evaluation of various integrals which
occur in the text. Most of these can be reduced in straightforward
fashion to complete elliptic integrals; {\it only the partition
  function itself is of a type one step higher than the theta
  functions,} [emphasis added] and involves a little analysis which is
not found in textbooks.
\end{quote}
As mentioned earlier, 
the elliptic integrals are special cases of the 
 $_2F_1$ 
ordinary or Gaussian
hypergeometric functions.  
Moreover, the theta and elliptic functions are related to elliptic
integrals or their inverses. Onsager states that the partition
function is ``one step higher'' than the theta functions, 
%he may have
%already understood that 
and indeed
%the the  $_2F_1$ and related functions
%are insufficient for the task. The 
the
$_4F_3$ function that we
find in the evaluation of the partition function is a step or
two more complicated than the $_2F_1$ functions.

The power series method we originally used to arrive at
(\ref{eq-claim2}) from (\ref{eq-ons-series}) is purely algebraic.
Moreover, it is possible to guarantee the correct behavior on and
outside the radius of convergence.  One way around the convergence
issue is analytic continuation.  However, this is in fact not needed
because the series converges even at the critical temperature and can
be evaluated in terms of Catalan's constant.  We will not further
discuss these technical points and will instead take a much
easier-to-understand approach.

We give below an elementary proof.
%It is a rigorous proof, immune
%to questions about convergence.  
%
Noting that differentiation is much simpler than integration, we will
differentiate $-\ln \lambda_*$ and then use the fundamental theorem of
calculus, obtaining $u$. So $\ln \lambda$ and $\ln \lambda_*$ differ
only by a constant, which we will show is zero.
We next state and prove 
a hypergeometric identity, from which 
Theorem \ref{th-main} follows easily:

\begin{lemma}
\label{lemma1}
\begin{eqnarray}
&
{_4F_3}
\left[ \ba{c} 2,2,{5\over 2},{5\over 2} \\ {3,3,3}\ea ;x \right]
\nonumber
\\
& \quad = 
- {128\over 9 x^2} 
+{256 {\mathsf K}(\sqrt x)\over 9 \pi x^2}
-{32\over 9 x} ~
{_4}F_3
\left[ \ba{c}1,1,{3\over 2},{3\over 2} \\ {2,2,2}\ea ;x \right]
~.
\label{eq-lemma-rgerg}
\end{eqnarray}

\end{lemma}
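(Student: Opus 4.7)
The plan is to derive the lemma from the $_3F_2$ reduction
\mbox{${_3F_2}[1,3/2,3/2;2,2;x] = -4/x + 8\mathsf{K}(\sqrt{x})/(\pi x)$}
already stated in the excerpt, by combining one Cauchy integration and one differentiation of the associated power series.

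Write $A(x) := {_3F_2}[1,3/2,3/2;2,2;x]$ and $B(x) := {_4F_3}[1,1,3/2,3/2;2,2,2;x]$. The ratio of the $n$th Maclaurin coefficients of $B$ and $A$ is $(1)_n/(2)_n = 1/(n+1)$, which is precisely the effect of applying the operator $f \mapsto x^{-1}\int_0^x f(t)\,dt$ term by term. Hence $xB(x) = \int_0^x A(t)\,dt$, and differentiating yields the key relation $A(x) = B(x) + xB'(x)$. Although the closed form of $A$ has a $-4/x$ term, the $n=0$ contribution from the elliptic part of $A$ supplies a cancelling $+4/x$, so $A$ is in fact analytic at the origin and the manipulations are rigorous.

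Next, I would invoke the standard parameter-shift rule
\mbox{$\frac{d}{dx}{_pF_q}[a_1,\ldots,a_p;b_1,\ldots,b_q;x] = \frac{a_1\cdots a_p}{b_1\cdots b_q}\,{_pF_q}[a_1{+}1,\ldots;b_1{+}1,\ldots;x]$},
which, with $(a_i)=(1,1,3/2,3/2)$ and $(b_j)=(2,2,2)$, gives $B'(x) = (9/32)\,{_4F_3}[2,2,5/2,5/2;3,3,3;x]$. Substituting the closed form of $A$ into $A=B+xB'$, solving for $B'$, and multiplying through by $32/9$ then reproduces (\ref{eq-lemma-rgerg}) line for line.

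The main obstacle is really only the verification of the two auxiliary identities just invoked --- the Cauchy-integral relation and the parameter-shift rule --- both of which follow in one line from the series definition (\ref{eq-pfq-def}). As a fully self-contained alternative that avoids the $_3F_2$ reduction entirely, one can equate coefficients of $x^m$ directly in (\ref{eq-lemma-rgerg}): after cancelling the $x^{-2}$ and $x^{-1}$ singularities, the reductions $(5/2)_n=(3/2)_n(2n+3)/3$ and $(3/2)_m=(1/2)_m(2m+1)$ collapse both sides to the same multiple of $(3/2)_m^2(2m+3)^2/[(m!)^2(m+1)(m+2)^3]$, and the lemma reduces to the elementary algebraic identity $1/(m+2)^2-1/(m+2)^3=(m+1)/(m+2)^3$.
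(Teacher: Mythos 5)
Your proposal is correct, and your primary route is genuinely different from the paper's. The paper proves the lemma by brute-force coefficient matching: it computes the degree-$n$ term of ${_4F_3}\big[\scriptsize\ba{c}2,2,{5\over 2},{5\over 2}\\ 3,3,3\ea;x\big]+\frac{32}{9x}\,{_4F_3}\big[\scriptsize\ba{c}1,1,{3\over 2},{3\over 2}\\ 2,2,2\ea;x\big]$, shows it equals $\frac{128}{9\pi}\big(\Gamma(\tfrac52+n)/\Gamma(3+n)\big)^2 x^n$ for $n\geq -1$, matches this against the series of $\frac{256}{9\pi x^2}{\mathsf K}(\sqrt{x})$, and isolates the lone $n=-2$ term to account for $-\frac{128}{9x^2}$. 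You instead derive the identity structurally: from the coefficient ratio $(1)_n/(2)_n=1/(n+1)$ you get $xB(x)=\int_0^x A(t)\,dt$, hence $A=B+xB'$, and combining this with the parameter-shift rule and the stated $_3F_2$ reduction yields the lemma by algebra alone. Your route is arguably more illuminating --- it exposes the contiguous-function mechanism behind the identity, and in fact the paper's own Eq.~(\ref{eq-special-deriv}) is exactly your relation $\frac{d}{dx}(xB)=A$ read in the opposite direction --- but it buys this at the cost of taking the reduction ${_3F_2}[1,\tfrac32,\tfrac32;2,2;x]=-\tfrac4x+\tfrac{8}{\pi x}{\mathsf K}(\sqrt{x})$ as an input. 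The paper only asserts that reduction in a remark (``comparing series one can show''), so a fully rigorous write-up along your lines must still verify it by a series comparison of essentially the same flavor as the paper's proof; you acknowledge this implicitly, and your self-contained alternative at the end (direct comparison of the $x^m$ coefficients, reducing to $\frac{1}{(m+2)^2}-\frac{1}{(m+2)^3}=\frac{m+1}{(m+2)^3}$ after cancelling the $x^{-2}$ and $x^{-1}$ terms) is sound and is, in substance, the paper's own argument. Both of your computations check out.
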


\begin{proof}

Using the series definition (\ref{eq-pfq-def}),
the term of degree $n\geq -1$ of 
$\big({_4F_3}
\big[ \scriptsize  \ba{c} 2,2,{5\over 2},{5\over 2} \\ {3,3,3}\ea ;x \big]
+
\frac{32}{ 9 x} ~
{_4}F_3
\big[ \scriptsize \ba{c}1,1,{3\over 2},{3\over 2} \\ {2,2,2}\ea ;x \big]
\big)$
is
\begin{eqnarray}
&
\frac{(2)_n^2(\frac{5}{2})_n^2}{n!(3)_n^3} x^n
+\frac{32}{ 9 x} 
\frac{(1)_{n+1}^2 (\frac{3}{2})_{n+1}^2}{(n+1)!(2)_{n+1}^3} x^{n+1}
\nonumber 
\\
&=
{128 \over 9 \pi }
 \left( {\Gamma(\frac{5}{2} +n) \over 
\Gamma( 3 +n) }\right)^2 x^n 
~.
\end{eqnarray}
Similarly, for the expression  $\frac{256}{ 9 \pi x^2} {\mathsf
  K}(\sqrt x)$, which by (\ref{eq-ellipticK-2f1}) is $\left(\frac{256}{ 18
  x^2}\right)  {_2}F_1 \big[ \scriptsize \ba{c}{\tiny 1/ 2},{\tiny 1/ 2}
  \\ {1}\ea ;x \big] $, we get for $n\geq -2$:

\begin{eqnarray}
&
\frac{256}{ 18
  x^2} 
\frac{(1/2)_{n+2}^2}{(n+2)!(1)_{n+2}} x^{n+2}
=
{128 \over 9 \pi }
 \left( {\Gamma(\frac{5}{2} +n) \over 
\Gamma( 3 +n) }\right)^2 x^n
~. 
\end{eqnarray}
The two series thus agree for the coefficients of $x^{n}$ for every
$n\geq -1$, but for $n=-2$ the $_4F_3$ terms do not contribute.  The
lone $x^{-2}$ term, for $n=-2$ in the above expression, is 
\be
{128 \over 9 \pi x^2 } \left( {\Gamma(\frac{5}{2}-2 ) \over \Gamma( 3-2 )
}\right)^2 ={128\over 9 x^2} ~.   
\ee
This term cancels the term $ - \frac{128}{9x^2}$ in
(\ref{eq-lemma-rgerg}) and the claim follows.
\end{proof}

\begin{proof}[Proof of Theorem  \ref{th-main}]

Observe from (\ref{eq-ons-u-def}) that $-\ln \lambda$ is an
antiderivative of $u$.  Recall that the infinitely many
antiderivatives $\int u(\beta)d\beta$ of the internal energy
$u(\beta)$ differ only by a constant.  Our strategy will be to
show that $-\ln \lambda_*$ is also an antiderivative of $u$ 
and with the same integration constant.

The general formula for the derivative of a ${_4}F_3$ function
is 

\begin{eqnarray}
& \deriv{x} ~
{_4}F_3
\left[ \ba{c}{a_1,a_2,a_3,a_4} \\  {b_1,b_2,b_3}\ea ;x \right]
\nonumber \\
& \quad =  
{a_1 a_2 a_3 a_4 \over b_1 b_2 b_3}
{_4}F_3
\left[ \ba{c}{a_1+1,a_2+1,a_3+1, a_4+1} \\ {b_1+1,b_2+1,b_3+1}\ea ;x \right]
~. 
\end{eqnarray}
This general differentiation rule gives us 
\begin{eqnarray}
& \deriv{x} ~
{_4}F_3
\left[ \ba{c}1,1,{3\over 2},{3\over 2} \\ {2,2,2}\ea ;x \right]
=
{9 \over 32}
~{_4}F_3
\left[ \ba{c} 2,2,{5\over 2},{5\over 2} \\ {3,3,3}\ea ;x \right]
~.
\end{eqnarray}
From the supporting lemma, 
we then  get
\begin{eqnarray}
& \deriv{x} 
{_4}F_3
\left[ \ba{c}{1,1,{3\over 2},{3\over 2}} \\ {2,2,2}\ea ;x \right]
\nonumber
\\
&   \quad =  -{4 \over x^2} + {8 {\mathsf 
 K}(\sqrt{x}) \over \pi x^2 } 
-
{{_4}F_3
\left[ \ba{c}{1,1,{3\over 2},{3\over 2}} \\ {2,2,2}\ea ;x \right]\over x}
\label{eq-3f4-deriv}
~. 
\end{eqnarray}
This result together with the product rule gives us
\begin{eqnarray}
& \deriv{x} ~ \left(
x~{_4}F_3
\left[ \ba{c}1,1,{3\over 2},{3\over 2} \\ {2,2,2}\ea ;x \right]
\right)
=
- {4 \over x}
+{8 {\mathsf K}(\sqrt x)\over \pi x}  
~.
\label{eq-special-deriv}
\end{eqnarray}
Notice that we get cancellation of the terms with the $_4F_3$
functions, leaving only the elliptic integral and an elementary term.
%
%
%
%
% \textcolor{blue}
{We point out that this result can also be understood in terms of
    an order-four linear differential operator that factors in terms
    of lower order operators, see ref.\cite{newreferee-ref}.}

We will differentiate $\ln \lambda_*$ starting from (\ref{eq-claim2})
using the chain rule. Note, however, that $\kappa$ as a function of $\beta$
is not invertible, because $\kappa(\beta)$ is not monotonic in $\beta$ and so
the inverse function $\beta(\kappa)$ is multivalued.  To get around this
problem,
let us introduce the change of variable
$\arctanh v = 2K $, so that 
%\pagebreak

\begin{eqnarray}
\nonumber
\tanh 2K = (\coth 2K)^{-1} &=  v\\
\nonumber
\cosh 2K = (\sech 2K)^{-1} &=   {1\over \sqrt{1-v^2}} \\
\nonumber
\sinh 2K = (\csch 2K)^{-1} &=   {v\over \sqrt{1-v^2}} \\
\kappa=\frac{1}{2} \tanh 2K \sech 2K &= {1\over 2} v \sqrt{1-v^2}
\label{eq-translate231tr3prjnnnnnnnnn}
~. 
\end{eqnarray}
%
%
%

%Substituting these 
% into (\ref{eq-claim2}) we get, 

%\textcolor{blue} 
{The hyperbolic function $\cosh 2K$ and the quantity
  $\kappa$ in (\ref{eq-claim2}) can be eliminated by re-expressing them in
  terms of the new variable $v$. Hence, direct substitution of
  (\ref{eq-translate231tr3prjnnnnnnnnn})
  into (\ref{eq-claim2}) gives us,}

\begin{eqnarray}
\ln \lambda_*&= 
\ln {2\over \sqrt{1-v^2}}
\nonumber \\
&\quad 
 -
{v^2 (1 - v^2 ) \over 4}
{_4}F_3
\left[ \ba{c}{1,1,{3\over 2},{3\over 2}} \\ {2,2,2}\ea ;   4 v^2 (1 - v^2) \right]
~.
\end{eqnarray}
We now use (\ref{eq-special-deriv}) to calculate $d \ln \lambda_*/dv$ using the
chain rule for derivatives. We get, after simplification,  
\be
{\partial \ln \lambda_*\over \partial v}
=
{\pi -\left(2-4 v^2\right) {\mathsf  K} (2v\sqrt{1-v^2}) \over 2 \pi  v \left(1-v^2\right)}
~.
\label{eq-woreijngeoirijn}
\ee
Note how the differentiation of $\ln \lambda_*$ thus leads to a remarkable
reduction of type $_4F_3$ to type $_2F_1$.
We next use the chain rule again:
\bel{eq-also0e0bt9h}
 {\partial \ln \lambda_* \over
  \partial \beta } =
{\partial \ln \lambda_* \over \partial v~~} 
{\partial v \over \partial   K} 
{\partial K \over \partial   \beta} 
= 2 (1-v^2) J~ {\partial \ln \lambda_* \over \partial v~~} 
~. \ee
From (\ref{eq-woreijngeoirijn}) and  (\ref{eq-also0e0bt9h}) we get the
following final expression:
\be
-{\partial \ln \lambda_*\over \partial \beta}
= 
-{J \over v} \left(1+ {2 \over \pi   } \left(2 v^2 -1\right) {\mathsf  K} (2v\sqrt{1-v^2})
\right)
~.
\label{eq-wow!!!}
\ee
Observe that (\ref{eq-wow!!!})  and (\ref{eq-ons-u}) are identical,
after changing variables using 
(\ref{eq-translate231tr3prjnnnnnnnnn}).  We have thus shown that $-\ln
\lambda_*(\beta)$ is an antiderivative of $u(\beta)$.  We are almost
done.

Since $-\ln \lambda_*$ and $-\ln \lambda$ are both
antiderivatives of $u$ with respect to $\beta$, therefore by the
fundamental theorem of calculus they differ only by a real
constant. Let \mbox{$C=\ln \lambda- \ln \lambda_*$}.  To show equality of
$\lambda_*$ and $\lambda$, it suffices to show that $C=0$.  We can
calculate $C$ explicitly from the values of both $\lambda_*$ and
$\lambda$ for some convenient value of $\beta$.  The easiest choice is
$\beta=0$, for which we get
$
\ln \lambda_*(0) = \ln \lambda(0) = \ln 2~.
$
So $C=0$ and the
claim follows. \qed

\end{proof}

\begin{proof}[Proof of Theorem \ref{th-glasser-onsager}]

We will proceed as with Lemma \ref{lemma1} and derive a
hypergeometric identity, from which Theorem \ref{th-glasser-onsager}
will follow immediately.

On the one hand, the coefficient of the term of degree $\kappa^{2n}$ for all $n\geq 1$ of the
quantity
\[
 \kappa^2 ~{_4}F_3
\left[ \ba{c}{1,1,{3\over 2},{3\over 2}} \\ {2,2,2}\ea 
; 16 \kappa^2 \right]  
\]
is
\begin{eqnarray}
&
16^{n-1}
\frac{(1)_{n-1}^2 (\frac{3}{2})_{n-1}^2}{(n-1)!(2)_{n-1}^3} 
%k^{2n} 
%\nonumber
%\\
&=
{4^{2n-1}  \over \pi n^3  } \left(\Gamma(n+ \frac 1 2  ) \over \Gamma(n)\right)^2
~.\label{eq-1-98y4tb7y}
\end{eqnarray}
On the other hand, 
the 
quantity
\be
 {1\over\pi} {\mathsf E(4 \kappa)} 
+ \kappa^2 ~{_4}F_3
\left[ \ba{c}{{1\over 2},1,1,{3\over 2}} \\ {2,2,2}\ea 
; 16 \kappa^2 \right]  
\ee
can be rewritten completely in terms of hypergeometric functions as
\be
 \left({1 \over 2}\right) ~
{_2F_1}\left[ \ba{c}{{1\over 2},-{1\over 2}} \\ {1}\ea ; 16\kappa^2 \right]
+ \kappa^2 ~{_4}F_3
\left[ \ba{c}{{1\over 2},1,1,{3\over 2}} \\ {2,2,2}\ea 
; 16 \kappa^2 \right]  
~.
\ee
The term of degree zero in the series above is $1/2$ because only the $_2F_1$
term contributes.  For all $n\geq 1$
the term of degree $2n$ has coefficient

\bea
&{1\over 2} {(\frac 1 2 )_n(- \frac 1 2 )_n \over n! (1)_n} 
+ 16^{n-1} {(\frac 1 2)_{n-1}  (1)_{n-1}^2 (\frac 3 2)_{n-1}   \over (n-1)! (2)_{n-1}^3}
%\\
%&= - {(\Gamma(2 n))^3 \over \Gamma(n) (\Gamma(1+n))^3}
\nonumber\\
&=-{4^{2n-1} \over \pi} {\Gamma(n-\frac 1 2 )\Gamma(n+ \frac 1 2)\over [\Gamma(n+1)]^2 }
+ {2^{4n-3} \over \pi n^3}   {\Gamma(n-\frac 1 2 )\Gamma(n+\frac 1 2 )\over [\Gamma(n)]^2 }
\nonumber\\
&= 
- {4^{2n-1}\over \pi n^3} \left({\Gamma(n+\frac 1 2 )  \over \Gamma(n)}\right)^2
~.\label{eq-2-98y4tb7y}
\eea

Observing that (\ref{eq-1-98y4tb7y}) and (\ref{eq-2-98y4tb7y}) are
identical except for sign for all $n\geq 1$ and then taking into
account separately the previously mentioned $n=0$ term, we arrive at 

\be
 \kappa^2 ~{_4}F_3
\left[ \ba{c}{1,1,{3\over 2},{3\over 2}} \\ {2,2,2}\ea 
; 16 \kappa^2 \right]  
= \frac 1 2 
- {1\over\pi} {\mathsf E(4 \kappa)} 
- \kappa^2 ~{_4}F_3
\left[ \ba{c}{{1\over 2},1,1,{3\over 2}} \\ {2,2,2}\ea 
; 16 \kappa^2 \right]  
~.~~
\ee
The claim follows from substitution of this hypergeometric identity
into Theorem \ref{th-main}.~\qed 

\end{proof}

\ack We thank T.~M.~Viswanathan for discussions and for suggesting the
study of formal power series.  We thank J. C. Cressoni and S. Salinas
for feedback and CNPq for funding. We are grateful to the anonymous
referees for helpful suggestions.

\section*{References}

\end{document}